\theoremstyle{definition}
\newtheorem{definition}{Definition}
\definecolor{LineNumberColor}{rgb}{0,0,1}
\newcommand{\systemname}{Tommy}
\newenvironment{parafont}{\fontfamily{ptm}\selectfont}{}
\newcommand{\Para}[1]{\vspace{2pt}\noindent\begin{parafont}\textbf{\textit{#1}}\end{parafont}}
\keywords{Fairness, Ordering, Sequencing, Clock Synchronization, Probabilistic Ordering}
\begin{document}

\title{Beyond Lamport, Towards Probabilistic Fair Ordering}

\author{
\rm{\Large Muhammad Haseeb$^{\text{\scriptsize[n]}}$ \enskip
    Jinkun Geng$^{\text{\scriptsize[n][s]}}$ \enskip
    Radhika Mittal$^{\text{\scriptsize[u]}}$ \enskip
    Aurojit Panda$^{\text{\scriptsize[n]}}$ \enskip  \\
    }
\rm{
    \Large 
 Srinivas Narayana$^{\text{\scriptsize[r]}}$ \enskip
 Anirudh Sivaraman$^{\text{\scriptsize[n]}}$ \enskip
    }
\\
  {\Large
  $^{\text{\scriptsize[n]}}$\textit{New York University}\enskip 
  $^{\text{\scriptsize[s]}}$\textit{Stony Brook University}\enskip 
  $^{\text{\scriptsize[r]}}$\textit{Rutgers University}\enskip 
  $^{\text{\scriptsize[u]}}$\textit{UIUC}\enskip}
}

\renewcommand{\shortauthors}{Muhammad Haseeb, Jinkun Geng, Radhika Mittal,\\ Aurojit Panda, Srinivas Narayana, Anirudh Sivaraman}

\begin{abstract}
    A growing class of applications demands \emph{fair ordering} of events, which ensures that events generated earlier are processed before later events. However, achieving such sequencing is challenging due to the inherent errors in clock synchronization: two events at two clients generated close together may have timestamps that cannot be compared confidently. We advocate for an approach that embraces, rather than eliminates, clock synchronization errors. Instead of attempting to remove the error from a timestamp,  \systemname{}, our proposed system, leverages a statistical model to compare two noisy timestamps probabilistically by learning per-clock synchronization error distributions. 
Our preliminary statistical model computes the probability that one event precedes another by only relying on local clocks of clients. This serves as a foundation for a new relation: \emph{likely-happened-before} denoted by $\xrightarrow{p}$ where $p$ represents the probability that an event happened before another. The $\xrightarrow{p}$ relation provides a basis for ordering multiple events which are otherwise considered \emph{concurrent} by Lamport's \emph{happened-before} ($\rightarrow$) relation. We highlight various related challenges including the intransitivity of the $\xrightarrow{p}$ relation as opposed to the transitive $\rightarrow$ relation. We outline several research directions: online fair sequencing, stochastically fair total ordering, and handling byzantine clients. 
\end{abstract}

\begin{CCSXML}
<ccs2012>
   <concept>
       <concept_id>10003033.10003039.10003051</concept_id>
       <concept_desc>Networks~Application layer protocols</concept_desc>
       <concept_significance>500</concept_significance>
       </concept>
   <concept>
       <concept_id>10002950.10003648</concept_id>
       <concept_desc>Mathematics of computing~Probability and statistics</concept_desc>
       <concept_significance>500</concept_significance>
       </concept>
 </ccs2012>
\end{CCSXML}

\ccsdesc[500]{Networks~Application layer protocols}
\ccsdesc[500]{Mathematics of computing~Probability and statistics}

\acmYear{2025}\copyrightyear{2025}
\setcopyright{acmlicensed}
\acmConference[HotNets '25]{The 24th ACM Workshop on Hot Topics in Networks}{November 17--18, 2025}{College Park, MD, USA}
\acmBooktitle{The 24th ACM Workshop on Hot Topics in Networks (HotNets '25), November 17--18, 2025, College Park, MD, USA}
\acmDOI{10.1145/3772356.3772401}
\acmISBN{979-8-4007-2280-6/25/11}

\maketitle

\section{Introduction}

Sequencers play a pivotal role in distributed systems, providing a mechanism to impose a total order on events occuring potentially at different locations. They are essential components in several fundamental protocols, such as consensus and concurrency control. In consensus protocols (e.g., Paxos~\cite{paxos} and Raft~\cite{raft}), the leader node serves as the sequencer for deciding a total order, as well as an orchestrator for achieving agreement on the total order. More recently, network-based sequencers have been introduced to offload some of the complexity from these protocols. Systems such as NOPaxos~\cite{nopaxos}, Hydra~\cite{hydra}, and Eris~\cite{eris_dan} decouple sequencing from the rest of the functionality, proposing dedicated sequencers to improve overall system efficiency.

At its core, the function of a sequencer is simple: assign ranks to incoming messages, thereby establishing a total order for processing the messages. This ranking is typically independent of when a message was originally generated. Instead, it is assigned based on the order in which it is \emph{observed} by a server/sequencer (i.e., FIFO sequencer). In most traditional applications, this FIFO approach suffices, as the system only requires \textit{some} ordering, even if arbitrary. 

\begin{figure}[!t]
    \centering
    \includegraphics[width=0.5\textwidth]{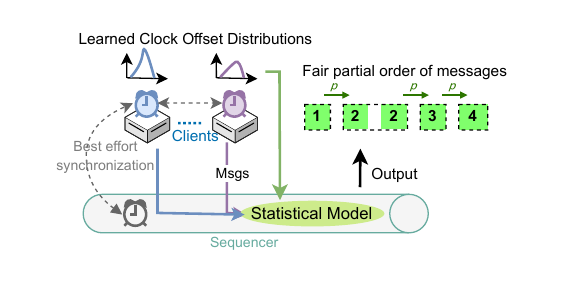}
    \vspace{-1cm}
    \caption{The sequencer, \systemname{}, uses clock offset distributions and noisy timestamps of messages to achieve a fair ordering of messages via a statistical model. Messages whose order cannot be confidently determined become part of the same output batch. }
    \label{fig:sys_model}
    \vspace{-0.5cm}
\end{figure}

We make a case for fair ordering which, unlike FIFO ordering, requires that \emph{an earlier generated event is ordered before a later generated event}. The FIFO order could be naturally closer to the fair order if the time between generation of every two events is large enough that arbitrary network delays do not obscure the order of events. However, there is a rise in applications in which a large volume of events is generated close together. These applications demand a sequencing mechanism that explicitly aligns the ordering of events with the timestamps at which the events are generated. It is particularly prominent in financial exchanges, ad exchanges, and other competitive systems~\cite{adex, cloudex, dbo, shoebot1, shoebot2, bot3, aerial_marketplace}, where fairness is paramount, we call such applications \emph{auction-apps}. In such applications, millions of events by hundreds of clients are generated within a very small window of time upon some sensitive event, for example, in financial exchanges some event leading to market volatility may be broadcasted to all the clients simultaneously~\cite{cloudex, dbo, jasper}, eliciting a large volume of responses by the clients. In these settings, ensuring that an earlier-generated message is ranked lower (processed sooner) than a later-generated one is crucial for maintaining fairness among participants. It is because of such fairness requirements and lack of fair ordering primitives, that exchanges today are built in private data-centers and not on a general purpose networking fabric e.g., that of the public cloud as seen by its tenants. 

\Para{Recent Efforts:} Recently the community has alluded to such ordering in the context of auction-apps, but either the solutions are (i) impractical~\cite{cloudex, jasper} due to strong assumptions (e.g., near-perfect clock synchronization), or (ii) not generally reusable because of being coupled with the intricacies of a particular application~\cite{dbo}. We define a general mechanism for achieving fair ordering as a \emph{fair sequencer}: a sequencer that guarantees that an earlier event is ranked lower (i.e., processed sooner) than a later event with a high probability. 

\Para{Classical Context:} Lamport's seminal work on ordering of events~\cite{leslie_ordering} introduces the \emph{happened-before} ($\rightarrow$) relationship. If two events $a$ and $b$ are causally related i.e., $a$ causes $b$, then they can be ordered i.e., $a \rightarrow b$. The relation $\rightarrow$ is a transitive relation so a set of related events can be partially ordered. Two concurrent events, i.e., for whom a causal relationship cannot be determined are left unordered i.e., $a \nrightarrow b$ and $b \nrightarrow a$. We are precisely interested in ordering such concurrent events; a hard feat in its general essence as we establish in this paper, but very much needed for fair ordering. 

\Para{Fundamental Challenge:} Ideal fair ordering requires perfect clock synchronization so that two timestamped-events (from two different clients) can be ordered correctly even if the network reorders them. Perfect clock-synchronization is impossible to achieve in asynchronous or bounded-synchronous networks~\cite{limits_on_clock_sync, lundelius_clock} due to fundamental uncertainty around link delays. It is impossible to synchronize clocks of $n$ processes any more closely than $u(1 - 1/n)$ where $u$ represents the uncertainty in the link delays~\cite{lundelius_clock}. This impossibility of clock synchronization makes it challenging to achieve fair ordering even if all parties are trusted~\cite{dbo}.

\Para{An Approximate Solution and When It Fails:} In a constrained setting where the time resolution of interest is significantly coarser than the clock synchronization errors, the fair sequencer can be implemented by a straightforward algorithm as clock errors can be effectively ignored: by waiting for at least one message from every client and then releasing the message with the smallest timestamp, iteratively. This algorithm achieves a fair total order, provided in-order delivery of messages per client. This approach is practical in environments where all client VMs and the sequencer reside within a single data center, as clock synchronization errors can be reduced to nanoseconds~\cite{huygens}, making it practical for systems operating at microsecond or higher time resolutions. However, when the required resolution is finer or clock synchronization errors become pronounced, such as in multi-data center deployments where the errors easily reach tens of microseconds, this approach is insufficient. To address these broader challenges, we call for a generally fair sequencer.

\Para{A Research Vision and Associated Challenges:} 
We advocate leveraging the insight that two \emph{local} timestamps from two clients can be compared if the clock offsets distributions of the clients are known. A client can learn its distribution of clock \emph{offsets} (w.r.t. the sequencer's clock), for example, by accumulating synchronization probes\footnote{A synchronization probe is a packet sent by a clock synchronization protocol from one client to the other to find and correct any clock offset.} from any clock synchronization protocol. The learned offsets' distributions are shared with the sequencer, enabling a comparison of two local timestamps. Figure~\ref{fig:sys_model} sketches a possible system architecture. Based on this ability, we introduce a new relation: \emph{likely-happened-before}, $\xrightarrow{p}$ where $p$ denotes the probability i.e, in $x \xrightarrow{p} y$, $x$ happened before $y$ with probability $p$. Similar to how $\rightarrow$ relation is used for defining a partial order on events, the $\xrightarrow{p}$ relation can be used to provide a \emph{fair} partial order. However, as the $\xrightarrow{p}$ relation is probabilistic, ordering \emph{all} concurrent events with high confidence may not always be possible. Hence, only a partial order is expected. It is important to minimize such instances of \textit{non-ordering} as otherwise a trivial solution is to leave all events as unordered. 
This ordering based on $\xrightarrow{p}$ constitutes fair ordering. 

There are two main challenges in using the $\xrightarrow{p}$ relation to achieve fair ordering: (i) unlike the $\rightarrow$ relation, the $\xrightarrow{p}$ relation is not necessarily transitive, so using it to order more than two events is non-trivial and, (ii) finding the probability \emph{p} for constructing $\xrightarrow{p}$ relations. We later present a preliminary statistical model to calculate \emph{p}. Once \emph{p} is known, it can be used to obtain an ordering which has high confidence~(\S\ref{batching}). 

\Para{Intransitivity and Ordering of Multiple Events: } It is possible for the probability of event A preceding event B to be high, the probability of B preceding C to be high, and yet the probability of C preceding A to also be high. In a similar vein, an ordinary cat may prefer fish to meat, meat to milk and milk to fish, in exhibiting cyclic ordering. This renders $\xrightarrow{p}$ not necessarily a transitive relation, hindering us from defining an order on the events from pairwise relations. We later discuss a solution for handling such intransitivity, while also presenting a sequencer for the case where probabilities are transitive. Transitivity exists for some \emph{nicely shaped} distributions like Gaussian distributions (proof in Appendix~\ref{app:proof_guass}) but may not hold for arbitrary distributions (e.g.,~\cite{efron_dice}).

Furthermore, online sequencing is an equally challenging problem as sequencing a given set of events primarily because of (i) network asynchrony and, (ii) figuring out whether some future events may need the same or lower rank than some given events. We later discuss a direction for achieving online sequencing. We prototype our statistical approach, \systemname{}, and present simulation results demonstrating its effectiveness compared to a naive TrueTime (Spanner) based baseline~\cite{spanner}. We highlight a range of research directions enabled by our approach --potentially culminating in a novel sequencing primitive that supports a broad class of emerging applications atop general-purpose networking infrastructure.
\section{Related Work and Motivation}

\Para{Cloud Exchanges:} Recent proposals for cloud-hosted financial exchanges~\cite{cloudex, jasper, dbo} deal with the same sequencing problem. However these systems either simplify the problem by making strong assumptions like negligible clock synchronization errors or they reduce the scope of the problem by limiting what kinds of events are possible. Figure~\ref{fig:sequencer_b} shows a \textbf{W}aits \textbf{F}or \textbf{O}ne (WFO) sequencer which waits for one message from all clients and releases the one with the smallest timestamp, iteratively. This sequencer is employed by~\cite{onyx} and works as long as the clock synchronization errors are small enough to be ignored so that the timestamps on the messages can be considered representing a global-clock time. 

\Para{On-Prem Exchanges:} On-premise exchanges engineer their infrastructure for fair ordering: connecting all clients to the server using equal length wires and employing low jitter switches (e.g., L1 switches~\cite{l1_switch}). In such a setting, the server can process messages in the order of their arrival which would be equivalent to ordering them on their generation timestamps (Figure~\ref{fig:sequencer_a}). However, such a sequencer can only be deployed by modifying the underlying infrastructure. \systemname{}, our proposal, is a solution that does not make such assumptions or require special infrastructure (Figure~\ref{fig:sequencer_c}). 

\begin{figure}[!t]
    \centering
    \begin{minipage}[t]{0.15\textwidth}
        \centering
        \includegraphics[width=1.3\textwidth]{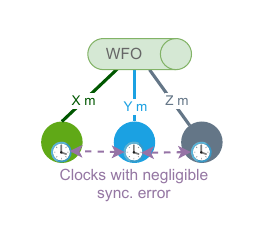}
        \vspace{-1.3cm}
        \captionof{figure}{\textmd{Fair if clocks are perfectly synchronized.}}
        \label{fig:sequencer_b}
    \end{minipage}
    \hfill
    \begin{minipage}[t]{0.15\textwidth}
        \centering
        \includegraphics[width=1.32\textwidth]{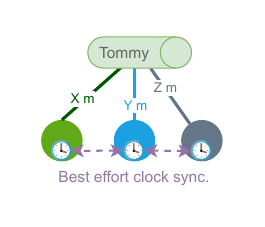}
        \vspace{-1.3cm}
        \captionof{figure}{\textmd{Fair w/o constraints but probabilistically.}}
        \label{fig:sequencer_c}
    \end{minipage}
    \hfill
    \begin{minipage}[t]{0.15\textwidth}
        \centering
        \includegraphics[width=1.1\textwidth]{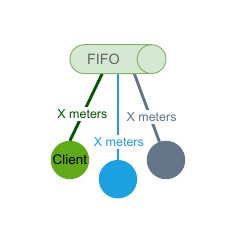}
        \vspace{-1.3cm}
        \captionof{figure}{\textmd{Fair if all wires are of equal length.}}
        \label{fig:sequencer_a}
    \end{minipage}

\end{figure}

\Para{Departing from Arbitrary Ordering}: Pompe~\cite{byantine_ordered_consensus} proposes departing from an arbitrary total order and instead allowing the nodes of a Replicated State Machine to present hints about their desired ordering of events. However, Pompe is fundamentally different from \systemname{} in its goals. Pompe focuses on the question of how to keep a subset of replicas from influencing the ordering of events unilaterally. It has applications in settings e.g., blockchains, where Byzantine failures are possible. \systemname{}, on the other hand, focuses on whether and how an ordering of events can be achieved which reflects the true order of event occurrences. 


\Para{Our Motivation: } Our motivation stems from the efforts around migrating financial exchanges to the public cloud. Financial exchanges have traditionally been built in private data centers or colocation facilities, where the physical network is engineered to provide fairness guarantees. This eliminates the need for a fair sequencer in such environments. However, a recent wave of research~\cite{cloudex, dbo, jasper, onyx} exploring the migration of financial exchanges to the public cloud has created a demand for new networking primitives. One such primitive, briefly mentioned in Onyx~\cite{onyx}, is a sequencer for fair total ordering. The design of Onyx assumes that clock synchronization errors are significantly smaller than the time resolution of interest, allowing it to disregard clock variability. However, we observe that this assumption does not hold if the system is deployed across multiple cloud regions where the clock synchronization errors can be significantly higher (e.g., in the order of milliseconds~\cite{tiga}), necessitating a more generalized fair sequencer.

\Para{Auction-apps:} Beyond financial exchanges, many applications can benefit from such a sequencer, including ad exchanges and competitive marketplaces. \emph{An application involving a shared state among multiple clients, where any particular order of writes can be advantageous/disadvantageous for some clients i.e., clients may compete to write earlier than others, is a candidate for fair sequencing.} We call such applications \emph{auction-apps}. The rise of competitive marketplaces~\cite{adex, cloudex, dbo, shoebot1, shoebot2, bot3, aerial_marketplace} and our discussion with experts demonstrate that such applications are becoming ubiquitous. 

\Para{Fairness: } We use the term fairness differently from the typical networking/scheduling notions of fairness i.e., Jain's index~\cite{jainsfairness}, CFS~\cite{cfs} or throughput-centric fairness. We define fairness in sequencing as follows:

\begin{definition}[Fair Sequencing]
Messages generated by the clients should be seen by a server in the same order as their generation is observed by an omniscient observer.\footnote{An omniscient observer has access to a global clock with infinite resolution and has instantaneous knowledge of all events. It serves as an idealized scheme to compare against.}
\label{def:fairsequencing}
\end{definition}

Other notions of fairness in sequencing may also be possible, but we only focus on Definition~\ref{def:fairsequencing}. Furthermore, in practice, the timestamp (w.r.t the local clock) at which an event actually occurs and the timestamp (w.r.t the local clock) that is associated with the event generation by a client can have non-zero difference because of some \emph{latency} between the application and the clock. For the sake of this position paper, we assume the difference is negligible.
\section{Preliminary Design for \systemname{}}

Each client's clock may have some error w.r.t. the sequencer's clock due to imperfect clock synchronization.\footnote{Synchronizing clients' clocks with the sequencer's clock is sufficient as opposed to synchronizing clients' and sequencer's clock with a global clock.} The sequencer, \systemname{}, receives messages from clients with timestamps attached, attempts to order them and form batches ($B_i, B_j,..$). All messages within a batch $B_i$ will have a rank $i$ where successive batches have higher ranks. Ideally, if message $a$ is created before message $b$ according to the global-clock, then the rank of the batch containing $a$ should be smaller than the rank of the batch containing $b$. If two timestamps cannot be ordered confidently, then the corresponding messages should be part of the same batch. The challenge is to come up with the batches that maximizes fairness: the more batches we make, the better fairness we achieve.\footnote{Assuming no two events occur at the same instant.}

We decompose the above problem into two steps: (i) finding the probability of one message preceding another message (\S\ref{ordering_prob}, \S\ref{arb_dists}) to construct the $\xrightarrow{p}$ relation and, (ii) using the pairwise relationships to get ordered batches (\S\ref{batching}) that provides a fair partial order on all messages. We assume all messages are present at the sequencer before it starts sequencing. Later in~\S\ref{sec:streaming}, we lift this assumption. The preliminary system does not handle the case where clock offset distributions lead to intransitive probabilities on the order of events, but we discuss a direction for a possible solution. 

\subsection{System Model}

Each client submits a message to the sequencer and attaches the current timestamp from its local clock. A message $i$ has timestamp $T_i$. 
However, due to clock synchronization errors, the true timestamp of the message (from the sequencer's perspective) is:
\(T_i^* = T_i + \theta_i\)
where $\theta_i$ represents the clock offset of a client (w.r.t the sequencer's clock) at the exact moment when the message $i$ is generated. The offset $\theta_i$ is unknown but follows probability distribution $f_{\theta_i}$.
The sequencer can observe $T_i$, not $T_i^*$. 

Different clients may have different distributions due to heterogeneous synchronization conditions (e.g., different temperature in different parts of a data center, asymmetric latency between clients). Each client learns their own distribution (by accumulating clock synchronization probes) and provides information about their distribution to the sequencer (\S\ref{sec:future}).

\subsection{Ordering Probability}
\label{ordering_prob}

It is impossible to compute $T_i^*$ exactly but we can compare two timestamps $T_i^*$ and $T_j^*$ by only observing $T_i$ and $T_j$ using a probabilistic analysis that assumes the knowledge of clock offset distributions $f_{\theta_i}$ and $f_{\theta_j}$.

We analyze the probability that one event/message precedes another. This probability is called the \emph{preceding-probability}:
\[
    \mathbb{P}(T_i^* < T_j^* \mid T_i, T_j) = \mathbb{P}(T_i + \theta_i < T_j + \theta_j).
\]
Rearranging,
\[
    \mathbb{P}(T_i^* < T_j^* \mid T_i, T_j) = \mathbb{P}(\theta_j - \theta_i > T_i - T_j).
\]
Since $\theta_i$ and $\theta_j$ are random variables, their difference follows a new distribution:
\[
    \Delta \theta = \theta_j - \theta_i \sim f_{\Delta \theta}.
\]
Then the preceding-probability is given by:
\[
    \mathbb{P}(T_i^* < T_j^* \mid T_i, T_j) = \int_{T_i - T_j}^{\infty} f_{\Delta \theta} d\Delta.
\]

If two independent random variables follow Gaussian distributions, then their difference also follows a Gaussian distribution~\cite{Proakis1985ProbabilityRV}. Therefore, for independent Gaussian-distributed clock synchronization errors, $\Delta \theta$ would be Gaussian-distributed. In this case, the preceding-probability is simply \(\Phi \left( \frac{T_j - T_i + (\mu_i - \mu_j)}{\sqrt{\sigma_i^2 + \sigma_j^2}} \right),\) where $\Phi(x)$ is the standard normal CDF, and $\mu_i$ and $\sigma_i^2$ respectively represent the mean and variance of $f_{\theta_i}$. 

\subsection{Handling Arbitrary Distributions}
\label{arb_dists}

When the clock offsets \( \theta_i \) and \( \theta_j \) follow arbitrary distributions rather than Gaussian or when we are uncertain about the distribution of $\Delta \theta$, we may not have a well-known solution form. Such cases have been reported where although the clock-offsets data appear Gaussian-like, it shows a long tail and skewed behavior~\cite{kim2015modeling}. We must estimate the PDF $f_{\Delta \theta}$ for each pair of clients to compute the preceding probabilities to account for non-Gaussian behavior. 

\Para{Computing all $\Delta \theta$s to get $f_{\Delta \theta}$:} 
For each round of clock synchronization probes to the clients, a sequencer could gather \emph{all} the probes and calculate pairwise probe differences ($\Delta \theta$s) and learn their distribution ($f_{\Delta \theta}$) across several rounds. This is communication and computation intensive.
If a clock sync. protocol has a high probe frequency, it would increase the communication to sequencer as well. However, a simpler and efficient method exists, explained below. 

\Para{Clients learn their own $f_{\theta_i}$:} If clients learn their own offset (w.r.t. the sequencer's clock) distributions over several rounds of clock synchronization, they can share their respective distributions with the sequencer which could perform (pairwise) convolutions to estimate $f_{\Delta \theta}$ for each pair of clients. 

\Para{Convolution for finding the Probability Density Function (PDF):} The PDF of $\Delta \theta$$=\theta_j - \theta_i$ is given by the \emph{convolution} of the individual PDFs \( \theta_i \) and \( \theta_j \) i.e., $\displaystyle
f_{\Delta \theta}(\Delta) = \int_{-\infty}^{\infty} f_{\theta_j}(\xi) f_{\theta_i}(\xi - \Delta) d\xi.$ This approach requires less communication from the clients to the sequencer as clients merely send their respective learned distributions to the sequencer as opposed to sending all the clock synchronization probes. 

\Para{Optimizing the convolutions calculations:} The calculations of all pairwise convolutions at the sequencer can further be optimized by leveraging Fast Fourier Transform: convolution in the time domain is multiplication in the frequency domain. Instead of computing a convolution, we can (i) compute Fourier transforms of $f_{\theta_j}$ and $f_{-\theta_i}$, (ii) multiply them point-wise and, (iii) compute the inverse Fourier transform to get $f_{\Delta \theta}$. This process has log-linear time complexity if using FFT, as opposed to the quadratic complexity of convolution.

Once \( f_{\Delta \theta} \) is obtained, the preceding-probability is simply:
\(\displaystyle
\mathbb{P}(T_i^* < T_j^* \mid T_i, T_j) = \int_{T_i - T_j}^{\infty} f_{\Delta \theta}(\Delta) d\Delta.
\)
This framework supports arbitrary clock error models, making it robust for real-world environments. 
\subsection{Fair Ordering}
\label{batching}

Once we define the $\xrightarrow{p}$ relation, we can work towards ordering multiple events. We model each message as a node in a graph, where $\xrightarrow{p}$ denotes a directed edge with weight $p$. In our construction, there will be two edges between each pair of nodes; for every such pair, we discard the edge with the lower weight (assuming no ties). From the resultant graph, we can extract a linear ordering of events by finding a topological ordering. Questions remain whether a topological ordering exists or which topological ordering to select if multiple orderings are possible. 

Assuming clock offsets distributions that lead to transitivity for $\xrightarrow{p}$, the graph forms a \emph{transitive tournament}~\cite{Gass01061998}.\footnote{A directed graph with exactly one edge between every pair of nodes is called a tournament. A tournament in which the edge relation is transitive is called a transitive tournament.} Transitive tournaments have a unique Hamiltonian path, hence a unique topological ordering. So the problem simplifies in the case of transitivity. In Appendix~\ref{app:proof_guass}, we prove how Gaussian distributions always lead to the required transitivity. Appendix~\ref{app:fair_ordering_example} illustrates an example with several events and respective transitive preceding probabilities and how fair ordering is achieved. 

In the case of intransitivity of $\xrightarrow{p}$, the resulting graph could be cyclic so no topological ordering may be possible. We may need some transformation of the graph to enable extracting a (most probable) linear ordering. One option is to remove some edges that renders the graph acyclic. However, it would lead to unfairness towards some messages/clients. For example, for three events whose preceding probabilities form a cycle, we could remove one edge to get a linear ordering but the removal of the edge would lead to ignoring one preceding-probability in the final linear ordering. A notion of stochastic fairness could be introduced and every time a set of messages is processed, we remove some edges from the graph in a fashion that leads to fairness over the long run. However, finding the smallest set of edges whose removal would make a graph acyclic is an NP-hard problem. These aspects of fair ordering make the problem non-trivial under intransitivity, warranting further research. 

The extracted linear ordering from the graph, even under transitive probabilities, cannot be construed as a final ordering. $\xrightarrow{p}$ relations of some adjacent messages in the linear ordering have a $p$ just slightly above 0.5 while other may have a $p$ close to 1; so it cannot be considered fair with a reasonable confidence. We batch adjacent messages such that if $i \xrightarrow{p} j$ has $p > {threshold}$ then a batch boundary is created between $i$ and $j$, making $i$ and $j$ belong to two different batches. Finally, the first such batch is assigned a rank of 0 while successive batches get incremental ranks, yielding a fair ordering of messages. The messages which we cannot order confidently become part of the same batch; thus our ordering is partial and not total. The hyper-parameter ${Threshold}$ dictates the confidence of our ordering and needs to be selected carefully. 

A ${Threshold}$ closer to 1 creates fewer and bigger batches, while a ${Threshold}$ closer to 0.5 creates smaller and more batches. A higher value of ${Threshold}$ provides more confidence in the output ordering but may lead to more number of messages left as unordered i.e., part of the same batch. Ideally, each batch should be of size 1. Hence, maximizing fairness amounts to creating smaller batches. While maximizing correctness may require staying indifferent about the (concurrent) messages, i.e., making them part of the same batch as we can never be 100\% confident about ranking of batches. We leave the optimization of ${Threshold}$ as future work and currently use a value of 0.75 in the evaluation.  

Although we achieve partial ordering on the messages, it is a total ordering on the batches. The sequencer emits one batch at a time to an upstream application for further processing of the corresponding messages. 
\subsection{Online Sequencing}
\label{sec:streaming}

The above discussion on ordering assumes that the sequencer has received all the messages that need to be sequenced. However, in practice, messages arrive as a stream, and the sequencer must operate in an online fashion. Crucially, the sequencer must ensure that once a batch of messages is \emph{emitted}, i.e., released after sequencing, no new message should arrive that either belongs in the same batch or demands a lower rank. 

\Para{Two main questions: } The challenge of online sequencing boils down to answering two key questions.
\textbf{Q1:} Given a batch of timestamps (of messages), what future timestamps might still need to be included in the current batch? 
\textbf{Q2:} How can we ensure that all messages with timestamp $t$ (or $\leq t$) have already arrived at the sequencer? Q1 arises due to clock synchronization errors --specifically, a client $c$ may have enough uncertainty in their local timestamps that messages from another client, with later timestamps, must be grouped with $c$’s messages. In such scenarios, although two messages $i, j$ from a client can be ordered w.r.t each other, they must belong to the same batch as a third \emph{high-uncertainty message $k$} from another client. This is required because $\mathbb{P}(T_i^* < T_k^*)$ as well as $\mathbb{P}(T_j^* < T_k^*)$ can both be very small. The second question reflects the challenges introduced by network asynchrony. Appendix~\ref{app:online_seq_example} illustrates online sequencing with an example.

\Para{Safe batch emission:} We hint at how the answer to Q1 can be extracted which is equivalent to calculating \emph{waiting-period} to safely emit a batch. The sequencer can safely emit a batch if no new message that needs a lower or equal rank arrives during this waiting period, otherwise a new waiting period is calculated accounting for the newly received messages. This could in theory lead to blocking the sequencer from emitting any messages if the arrival pattern of messages and the clock offsets distributions are set adversely. We have not tackled this yet.

A safe way to emit a batch is to calculate a future time $T^F_i$ for each message $i$ in the batch such that 
\[\mathbb{P}(T^*_i < T^F_i) > p_{\mathrm{safe}}\]
where $p_{\text{safe}}$ can be set to a high value to ensure enough confidence (e.g., 0.999). $T^F_i$ that respects the above constraint can be trivially and efficiently computed by a binary search on the future timestamps. 

The safe emission time for the entire batch becomes:
\[T_b = \max_{k} \big( T^F_k \big) \quad \forall k \in \text{batch}\]
The sequencer after finalizing a batch, will only emit it (i) once its clock reaches $T_b$ timestamp and, (ii) it has not received any further messages that should be part of the batch or deserve a lower rank. If new messages arrive before $T_b$ which violate (ii), then $T_b$ is extended accounting for the new messages. 
The parameter $p_{\text{safe}}$ presents a trade-off between latency of emitting a batch and certainty of fairness. 

\Para{Dealing with network asynchrony:} There are several directions for dealing with network asynchrony (for Q2). Assuming bounded asynchrony and waiting for sufficiently long enough is a common practice while~\cite{cloudex} studies the impact of waiting period on fairness. Another direction, applicable to \emph{auction-apps} is to assume the knowledge of a fixed number of clients. This simple knowledge is powerful in answering Q2. To ensure all messages generated before some timestamp $t$ have arrived, the sequencer simply waits for messages or heartbeats with timestamp greater than $t$ from \emph{all clients}. This works as long as the communication between each client and the sequencer happens through an ordered delivery channel (e.g., TCP connection). However, this design may cost liveness i.e., a failed client may halt the sequencer from emitting any messages.
\begin{figure}[!t]
    \centering
    \includegraphics[width=0.4\textwidth]{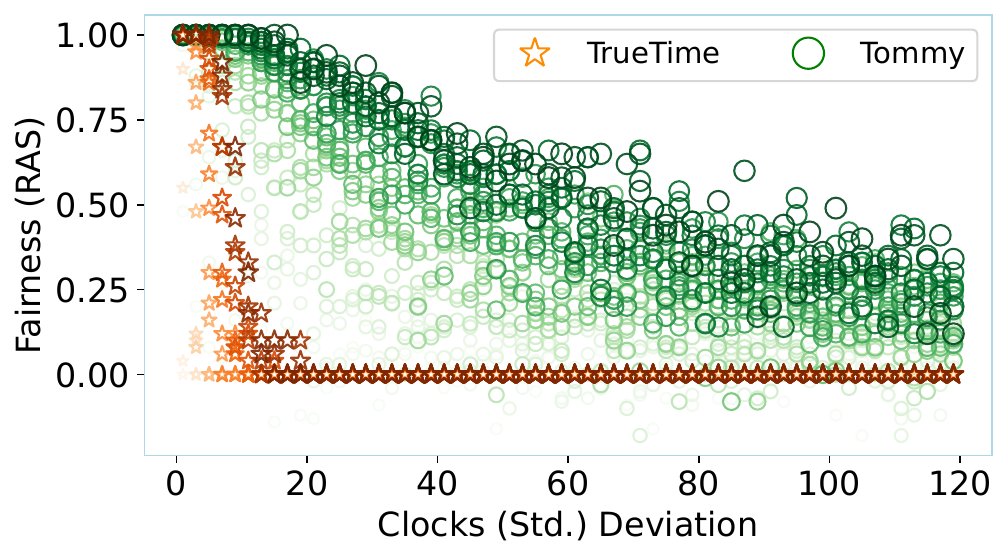}
    \caption{Tommy achieves fairer sequencing than TrueTime. Size of the marker (and color intensity) is proportional to the inter-messages gap across clients. }
    \label{fig:truthfulness2}
\end{figure}

\section{Evaluation}

We evaluate our statistical model using a simulator with 500 clients, each assigned a Gaussian clock offsets distribution, \(N(\mu,\sigma^{2})\). At message generation, a client reads the wall-clock time $t$, samples noise $\epsilon$ from the distribution, and tags the message with $T = t + \epsilon$. The sequencer receives all messages before ordering, i.e., we do not evaluate online sequencing. Ground-truth timestamps ($t$) are also collected for evaluation. Clients send their distributions and timestamped messages to the sequencer as shown in Figure~\ref{fig:sys_model}. We seed the clients with clock offsets distributions, instead of clients learning such distributions, so the following results are an upper-bound on the performance as the errors in estimating such distributions are not captured. 

For baseline, we emulate Spanner TrueTime~\cite{spanner}, where each message is assigned an uncertainty interval $[T - 3\sigma,\, T + 3\sigma]$, and overlapping intervals are assigned the same rank. 

We propose a metric, Rank Agreement Score (RAS): $+1$ for each correct ordered pair, $-1$ for incorrect, and $0$ for indifference i.e., for assigning same batch to a pair of messages. 

Figure~\ref{fig:truthfulness2} shows RAS (each point is the sum of RAS of all pairs of messages) for both approaches, with marker size (and color intensity) showing inter-messages gaps across clients. With low clock errors (lower x-axis), both systems perform comparably. \systemname{} outperforms (higher y-axis) TrueTime, when inter-messages gap decreases (marker size/color intensity decreases) and/or clock errors increase (higher x-axis). 
However, \systemname{}'s probabilistic nature can lead to negative RAS under high uncertainty/high clock errors, whereas TrueTime’s RAS remains 0 due to its conservative nature.

\section{Future Research}
\label{sec:future}

\Para{Characterization of $\xrightarrow{p}$}: Unlike Lamport's $\rightarrow$ relation, $\xrightarrow{p}$ relation is not necessarily transitive, which makes extracting the linear ordering a challenge. More research is needed to (i) render $\xrightarrow{p}$ transitive by some transformation of the problem space (e.g., barring the relation of some elements by enforcing constraints on event occurrence pattern), and (ii) studying the probability distributions of clock offsets to establish when $\xrightarrow{p}$ can be safely treated as transitive. 

\Para{Host-network variability: } Jitter in the host's data path can affect an application's access to the local clock as well as the latency of sending out a message. Advancements in low-latency and low-jitter host networking (e.g., DPDK~\cite{dpdk}, XDP~\cite{xdp}, RTOS~\cite{rtos}) have minimized latency variations in the host data path. However, it remains to be studied how low latency variance can be reduced and whether it sets an upper bound on the achievable fairness guarantees. 

\Para{Extension to Fair Total Order}: The proposed sequencer emits batches instead of individual messages. As the batch size can be arbitrarily large, some applications may require emitting individual messages instead of batches. Doing this would require extending the fair partial order to fair total order of messages. Arbitrarily breaking ties on messages of a batch would violate fairness as some clients may always be preferred over others. A random mechanism for breaking ties might be of interest as it would lead to stochastic fairness over a sufficiently long duration.

\Para{Learning Clock Offsets Distributions}: Any clock synchronization protocol gives each client enough information to estimate its offsets distribution. Each synchronization probe may add an offset (w.r.t. to the sequencer's clock) to the clock of a client. Such offsets can be used to estimate the distribution. This mechanism may be too brittle for extraordinary conditions like a part of the data-center experiencing abrupt temperature changes, leading to dramatic clock sync. errors. A robust mechanism for capturing such errors in the respective distributions is needed. Similarly, more research is needed to account for the clock drift errors along with the clock offsets errors in the error distributions. 

\Para{Byzantine Clients}: Byzantine failures further complicate the problem of fair sequencing. A study about achievable fairness guarantees in the presence of Byzantine failures is needed. Pompe~\cite{byantine_ordered_consensus} can serve as a promising starting point for further exploration. In \emph{auction-apps}, clients have an incentive to dictate sequencing of messages e.g., by manipulating the timestamps attached to the messages, as it may translate to monetary benefits e.g., winning trades in a financial exchange. In-depth investigation of security boundaries is needed to make fair sequencing practical. The trust models in \cite{onyx} provide a starting point. 

\section{Conclusion}

We present the problem of fair sequencing and associated challenges which warrant substantial future research. We advocate for utilizing clock offset distributions along with a best effort clock synchronization protocol to construct a pairwise relation, \emph{likely-happened-before}. The proposed relation helps in achieving probabilistic fair ordering of events, useful for an emerging class of applications which require fairly ordering several concurrent events.

\section{ACKNOWLEDGMENTS}

We thank the reviewers and our shepherd, Jon Crowcroft, for their helpful comments. We thank Ramakrishnan Krishnamurthy and Fabian Ruffy for their helpful comments and productive brainstorming sessions. This work was supported by NSF CAREER award (2340748). 

\bibliography{main}

\appendix
\section{Transivity holds for Gaussian Distributions}
\label{app:proof_guass}
\begin{proposition} Let $X,Y,Z$ be independent normal random variables
\[
X\sim\mathcal N(\mu_X,\sigma_X^{2}),\qquad
Y\sim\mathcal N(\mu_Y,\sigma_Y^{2}),\qquad
Z\sim\mathcal N(\mu_Z,\sigma_Z^{2}).
\]
Define the preference relation
\[
X \succ Y \;\Longleftrightarrow\; \Pr[X>Y]>\tfrac12.
\]
Then $\succ$ is transitive: if $X\succ Y$ and $Y\succ Z$, we necessarily have $X\succ Z$.
\end{proposition}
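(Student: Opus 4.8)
The plan is to reduce the probabilistic preference relation to a trivial numerical one. The key fact, already invoked in Section~\ref{ordering_prob}, is that for independent normals the difference $X - Y$ is again normal: $X - Y \sim \mathcal{N}(\mu_X - \mu_Y,\ \sigma_X^2 + \sigma_Y^2)$. Hence
\[
\Pr[X > Y] = \Pr[X - Y > 0] = \Phi\!\left(\frac{\mu_X - \mu_Y}{\sqrt{\sigma_X^2 + \sigma_Y^2}}\right),
\]
where $\Phi$ is the standard normal CDF. I would state this as the first step, citing the closure of the normal family under subtraction.

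The second step is the observation that makes the whole proposition collapse: since $\sqrt{\sigma_X^2 + \sigma_Y^2} > 0$ (assuming the variances are not both zero, i.e., the variables are genuinely random) and $\Phi$ is strictly increasing with $\Phi(0) = \tfrac12$, we have
\[
\Pr[X > Y] > \tfrac12 \quad\Longleftrightarrow\quad \frac{\mu_X - \mu_Y}{\sqrt{\sigma_X^2 + \sigma_Y^2}} > 0 \quad\Longleftrightarrow\quad \mu_X > \mu_Y.
\]
So $X \succ Y$ is equivalent to $\mu_X > \mu_Y$: the variances are irrelevant to the direction of preference.

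The final step is then immediate: if $X \succ Y$ and $Y \succ Z$, then $\mu_X > \mu_Y$ and $\mu_Y > \mu_Z$, so $\mu_X > \mu_Z$ by transitivity of the usual order on $\mathbb{R}$, which gives $X \succ Z$. I would close by remarking that this is exactly what fails for general (e.g., heavy-tailed or asymmetric) distributions, where $\Pr[X > Y] > \tfrac12$ is governed by more than a single location parameter — the Efron-dice phenomenon — so the Gaussian case is special precisely because the comparison factors through a scalar. There is no real obstacle here; the only point requiring a word of care is the degenerate case $\sigma_X = \sigma_Y = 0$, which one either excludes by hypothesis or handles separately by noting $X \succ Y \iff X > Y$ as constants, still a transitive relation.
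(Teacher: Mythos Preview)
Your proposal is correct and follows essentially the same route as the paper's own proof: compute $\Pr[X>Y]=\Phi\bigl((\mu_X-\mu_Y)/\sqrt{\sigma_X^2+\sigma_Y^2}\bigr)$ via closure of Gaussians under subtraction, observe that $\Phi$ is strictly increasing with $\Phi(0)=\tfrac12$ so that $X\succ Y \iff \mu_X>\mu_Y$, and then inherit transitivity from the usual order on $\mathbb{R}$. Your additional remarks on the degenerate $\sigma_X=\sigma_Y=0$ case and on why non-Gaussian families break this argument are a nice touch but go slightly beyond what the paper records.
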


\begin{proof}
For any two independent Gaussian variables $A\sim\mathcal N(\mu_A,\sigma_A^{2})$ and
$B\sim\mathcal N(\mu_B,\sigma_B^{2})$, the difference $A-B$ is Gaussian with
\[
A-B\;\sim\;\mathcal N\!\bigl(\mu_A-\mu_B,\;\sigma_A^{2}+\sigma_B^{2}\bigr).
\]
Hence
\[
\Pr[A>B]=\Pr[A-B>0]
        =\Phi\!\Bigl(\frac{\mu_A-\mu_B}{\sqrt{\sigma_A^{2}+\sigma_B^{2}}}\Bigr),
\]
where $\Phi$ is the standard–normal CDF. Now:

\begin{equation}\label{eq:start}
    \Pr[A>B]>\tfrac12
    \quad\Longleftrightarrow\quad
    \Phi\!\Bigl(\frac{\mu_A-\mu_B}{\sqrt{\sigma_A^{2}+\sigma_B^{2}}}\Bigr) >\tfrac12.
\end{equation}

As \(\Phi(0) = \tfrac12\), so:

\[
    \Phi\!\Bigl(\frac{\mu_A-\mu_B}{\sqrt{\sigma_A^{2}+\sigma_B^{2}}}\Bigr) >\Phi(0).
\]

Because $\Phi$ is a strictly increasing function,
\[
\Phi\left( \frac{\mu_A - \mu_B}{\sqrt{\sigma_A^2 + \sigma_B^2}} \right) > \Phi(0)
\quad \Longleftrightarrow \quad
\frac{\mu_A - \mu_B}{\sqrt{\sigma_A^2 + \sigma_B^2}} > 0.
\]

As the denominator $\sqrt{\sigma_A^2 + \sigma_B^2}$ cannot be negative, 

\begin{equation}
    \label{eq:means_matter}
    \Pr[A>B]>\tfrac12
    \quad\Longleftrightarrow\quad
    \mu_A-\mu_B>0
    \quad\Longleftrightarrow\quad
    \mu_A>\mu_B.
\end{equation}

Thus our preference rule depends \emph{only} on the means.

Now, suppose $X\succ Y$ and $Y\succ Z$.  This implies
\[
\mu_X>\mu_Y \quad\text{and}\quad \mu_Y>\mu_Z,
\]
which together give $\mu_X>\mu_Z$ because means (i.e., real numbers) are transitive. Applying eq. \ref{eq:means_matter} to $\mu_X>\mu_Z$, yields
$X\succ Z$.
\end{proof}


\section{Illustrative Example of Fair Ordering}
\label{app:fair_ordering_example}

We now walk through an example that illustrates the probabilistic ordering and batching process described in Section~\ref{batching}. The example involves four messages, $\{A, B, C, D\}$, each carrying a timestamp from a client clock. Because clocks are only approximately synchronized, the sequencer infers pairwise probabilities for which message likely occurred before another. These probabilities are derived from the clock-offset distributions.

\subsection{Constructing the Graph}
Suppose the sequencer estimates the following pairwise probabilities:

\[
\begin{array}{c|cccc}
 & A & B & C & D \\ \hline
A & - & 0.85 & 0.65 & 0.92 \\
B & 0.15 & - & 0.72 & 0.68 \\
C & 0.35 & 0.28 & - & 0.80 \\
D & 0.08 & 0.32 & 0.20 & - \\
\end{array}
\]

Each cell $(i,j)$ represents the probability $p$ that $i \xrightarrow{p} j$, i.e., message $i$ likely precedes message $j$.  
For every unordered pair $(i,j)$, we retain the edge with the higher probability and discard the reverse edge.  
For instance, between $(A,B)$, we keep $A \xrightarrow{0.85} B$ and discard $B \xrightarrow{0.15} A$.

The resulting directed edges form a tournament:
\[
A \xrightarrow{0.85} B,\,
A \xrightarrow{0.65} C,\,
A \xrightarrow{0.92} D,\,
B \xrightarrow{0.72} C,\,
C \xrightarrow{0.80} D,\,
B \xrightarrow{0.68} D.
\]

\subsection{Extracting the Linear Order}
This graph is acyclic and admits a unique topological ordering:
\[
A \prec B \prec C \prec D.
\]
If, however, some edges such as $C \xrightarrow{0.55} A$ were reversed, a cycle ($A \to B \to C \to A$) could form, reflecting an intransitive $\xrightarrow{p}$ relation.  
Breaking such cycles would require edge removals or probabilistic adjustments, which may introduce unfairness—illustrating the complexity discussed in Section~\ref{batching}.

\subsection{Batch Formation}
Even under a transitive ordering, adjacent pairs can differ substantially in confidence.  
Here, $A \xrightarrow{0.85} B$ and $C \xrightarrow{0.80} D$ both have high confidence, while $B \xrightarrow{0.72} C$ is more ambiguous.  
Using ${Threshold}=0.75$, we form a batch boundary wherever $p > 0.75$ between consecutive messages—indicating a confident precedence that warrants separation into distinct batches.

\[
A \xrightarrow{0.85} B \xrightarrow{0.72} C \xrightarrow{0.80} D
\]

Two boundaries are created:  
- one between $A$ and $B$ (since $0.85 > 0.75$), and  
- one between $C$ and $D$ (since $0.80 > 0.75$).

No boundary appears between $B$ and $C$, because their probability $0.72$ is below the threshold, meaning the sequencer cannot confidently distinguish their order.  
The resulting batches are therefore:
\[
\text{Batch}_0 = \{A\}, \quad
\text{Batch}_1 = \{B, C\}, \quad
\text{Batch}_2 = \{D\}.
\]

The sequencer assigns Batch$_0$ rank 0, Batch$_1$ rank 1, and Batch$_2$ rank 2, yielding the final fair ordering:
\[
\{A\} \prec \{B, C\} \prec \{D\}.
\]

\smallskip
A higher threshold (e.g., 0.9) would result in fewer, larger batches—indicating stricter confidence requirements—while a lower threshold (e.g., 0.6) would yield finer-grained batching, approaching a total order.  
This example demonstrates how probabilistic confidence directly controls the granularity of fair ordering.


\section{Illustrative Example of Online Sequencing}
\label{app:online_seq_example}

We now provide an example corresponding to the discussion in Section~\ref{sec:streaming}. The example demonstrates how the sequencer answers the two key questions: ensuring all relevant messages have arrived (Q2) and determining 
how much to wait for new messages before emitting a batch of messages(Q1).

\subsection*{Q2: Ensuring Completeness of Message Arrivals}

Consider two clients, $C_1$ and $C_2$, each continuously sending messages to the sequencer with monotonically increasing local timestamps.  
Because network delays may differ across clients, messages do not necessarily arrive in timestamp order. The sequencer must ensure that when it emits a batch containing all messages up to timestamp $t$, no message with a timestamp smaller than $t$ is still in flight.

Assuming the sequencer knows the complete set of participating clients, a simple and robust rule suffices:  
the sequencer waits until it has received a message or heartbeat from \emph{each client} carrying a timestamp greater than $t$.  
Once this condition holds, it can safely conclude that all messages with timestamps $\leq t$ have already arrived.

This mechanism works regardless of variable network delay, as long as each client communicates through an ordered delivery channel (e.g., a TCP connection). It effectively bounds asynchrony and guarantees that the sequencer does not emit a batch prematurely.

\subsection*{Q1: What future messages may need to be included in a given batch of messages?}

We now examine how the sequencer determines which future messages might still need to be included in a given batch before emitting it.  
This question arises from clock uncertainty: even if two messages appear temporally separated in their local timestamps, their offsets distributions may overlap enough with the distribution of another client, forcing the sequencer to group multiple messages of one client together with the message of another client.

Assume there are two clients, $C_1$ and $C_2$, each with slightly different clock offsets.  
Client $C_1$ sends two messages ($1a$ and $1b$), while $C_2$ sends one message ($2$).  
The true (global) generation times are:
\[
T^*_{1a}=100.0, \quad T^*_{2}=100.2, \quad T^*_{1b}=100.3.
\]
Client $C_2$’s clock, however, is significantly more uncertain than $C_1$’s.  
Due to these offsets, the sequencer receives the reported timestamps as:
\[
t_{1a}=100.0, \quad t_{2}=100.6, \quad t_{1b}=100.3,
\]
and the messages arrive in the order $t_{1a} \rightarrow t_{2} \rightarrow t_{1b}$.

\smallskip
\noindent\textbf{Step 1: Initial batching.}
When $C_1$’s first message ($1a$) arrives, it forms its own tentative batch:
\[
\text{Batch}_0 = \{1a\}.
\]

The sequencer cannot emit a batch until it has met the criteria for safe emission, i.e., it has waited enough time so that no new messages can arrive that may belong to the same batch. We will visit the safe emission later in the example, assume for now that new messages arrive before safe emission criteria is met. 

\smallskip
\noindent\textbf{Step 2: Arrival of a high-uncertainty message.}
When $C_2$’s message arrives with timestamp $t_2 = 100.6$, its wide uncertainty interval means the sequencer cannot rule out the possibility, based on preceding probabilities, that it occurred before or after $1a$ in global time.  
To preserve fairness, the sequencer merges the two into one batch:
\[
\text{Batch}_0 = \{1a, 2\}.
\]
The batch remains \emph{open}, since a future message might still belong to it.

\smallskip
\noindent\textbf{Step 3: Arrival of a later message from the same client.}
Soon after, $C_1$ sends another message ($1b$) with timestamp $t_{1b}=100.3$.  
Even though $1b$ clearly follows $1a$ locally, the uncertainty around $C_2$’s message makes it impossible to confidently separate $1b$ from the ongoing batch.  
Hence, to maintain fairness, the sequencer places it in the same batch:
\[
\text{Batch}_0 = \{1a, 1b, 2\}.
\]

\smallskip
\noindent\textbf{Step 4: Safe emission.}
The sequencer computes for each message $i$ a future time $T^F_i$ such that
\[
\mathbb{P}(T^*_i < T^F_i) > p_{\mathrm{safe}},
\]
and defines the safe emission time of the batch as:
\[
T_b = \max_{k \in \text{Batch}_0} T^F_k.
\]
Once the sequencer’s clock reaches $T_b$ and no new message has arrived that belongs to Batch$_0$ (based on preceding probabilities), then the batch is considered safe to be emitted i.e., it is very unlikely that a new message will arrive that needs to belong to the batch being emitted. 

\smallskip
\noindent\textbf{Discussion.}
This example illustrates that a single high-uncertainty message (here, from $C_2$) can force multiple temporally distinct messages from another client (here, $C_1$’s $1a$ and $1b$) to share the same batch.  
The sequencer’s decision therefore depends not only on per-client timestamp order but also on the joint uncertainty distribution across clients.  
The choice of $p_{\mathrm{safe}}$ determines the trade-off between fairness confidence and emission latency.

\end{document}